\newcommand\E{\mathbb{E}}
\newcommand\R{\mathbb{R}}
\renewcommand\t{{\scriptscriptstyle\top}}
\newcommand\eps{\epsilon}
\newcommand\veps{\varepsilon}
\newcommand\tl{\tilde}
\newcommand\wh{\widehat}
\newcommand\tr{\operatorname{tr}}
\newcommand\rank{\operatorname{rank}}
\newcommand\diag{\operatorname{diag}}
\newtheorem{lemma}{Lemma}
\newtheorem{theorem}{Theorem}
\title{Analysis of a randomized approximation scheme for matrix multiplication}
\author[1]{Daniel Hsu}
\author[1]{Sham M. Kakade}
\author[2]{Tong Zhang}
\affil[1]{Microsoft Research, New England}
\affil[2]{Department of Statistics, Rutgers University}
\begin{document}
\maketitle
{\def\thefootnote{}
\footnotetext{E-mail:
\texttt{dahsu@microsoft.com},
\texttt{skakade@microsoft.com},
\texttt{tzhang@stat.rutgers.edu}}

\begin{abstract}
This note gives a simple analysis of a randomized approximation scheme for
matrix multiplication proposed by~\cite{Sarlos06} based on a random
rotation followed by uniform column sampling.
The result follows from a matrix version of Bernstein's inequality and a
tail inequality for quadratic forms in subgaussian random vectors.
\end{abstract}

\section{Introduction}

Let $A := [a_1|a_2|\dotsb|a_m] \in \R^{d_A \times m}$ and $B :=
[b_1|b_2|\dotsb|b_m] \in \R^{d_B \times m}$ be fixed matrices, each with $m$
columns.
If $m$ is very large, then the straightforward computation of the matrix
product $AB^\t$ (with $\Omega(d_A d_B m)$ operations) can be prohibitive.

We can instead approximate the product using the following randomized
scheme.
Let $\Theta \in \R^{m \times m}$ be a random orthogonal matrix; the
distribution of $\Theta$ will be specified later in Theorem~\ref{thm:main},
but a key property of $\Theta$ will be that the matrix products
\[ \tl{A} := A\Theta \qquad\text{and}\qquad \tl{B} := B\Theta \]
can be computed with $O((d_A + d_B) m \log m)$ operations.
Given the products $\tl{A} = [\tl{a}_1|\tl{a}_2|\dotsb|\tl{a}_m]$
and $\tl{B} = [\tl{b}_1|\tl{b}_2|\dotsb|\tl{b}_m]$, we take a small uniform
random sample of pairs of their columns (drawn with replacement)
\[ (\tl{a}_{i_1}, \tl{b}_{i_1}), (\tl{a}_{i_2}, \tl{b}_{i_2}), \dotsc,
(\tl{a}_{i_n}, \tl{b}_{i_n}) , \]
and then compute the sum of outer products
\[ \wh{AB^\t} := \frac{m}{n} \sum_{j=1}^n \tl{a}_{i_j} \tl{b}_{i_j}^\t . \]
It is easy to check that $\wh{AB^\t}$ is an unbiased estimator of $AB^\t$.
The sum can be computed from $\tl{A}$ and $\tl{B}$ with $O(d_A d_B n)$
operations, so overall, the matrix $\wh{AB^\t}$ can be computed with $O(d_A
d_B n + (d_A + d_B) m \log m)$ operations.
(In fact, the $\log m$ can be replaced by $\log n$~\cite{AilLib08}.)
Therefore, we would like $n$ to be as small as possible so that, with high
probability, $\|\wh{AB^\t} - AB^\t\| \leq \veps \|A\| \|B\|$ for some error
$\veps > 0$, where $\|\cdot\|$ denotes the spectral norm.
As shown in Theorem~\ref{thm:main}, it suffices to have
\[ n = \Omega\biggl( \frac{(k + \log(m)) \log(k)}{\veps^2} \biggr) , \]
where $k := \max\{ \tr(A^\t A) / \|A\|^2, \ \tr(B^\t B) / \|B\|^2 \} \leq
\max\{\rank(A), \rank(B)\}$.

A flawed analysis of a different scheme based on non-uniform column
sampling (without a random rotation $\Theta$) was given
in~\cite{HKZ12-matrix}; that analysis gave an incorrect bound on
$\|\E[X^2]\|$ for a certain random symmetric matrix $X$.
A different analysis of this non-uniform sampling scheme can be found
in~\cite{MagZou11}, but that analysis has some deficiencies as pointed out
in~\cite{HKZ12-matrix}.
The scheme studied in the present work, which employs a certain random
rotation followed by uniform column sampling, was proposed by
\cite{Sarlos06}, and is based on the Fast Johnson-Lindenstrauss Transform
of~\cite{AilCha09}.
The analysis given in \cite{Sarlos06} bounds the Frobenius norm error; in
this work, we bound the spectral norm error.
A similar but slightly looser analysis of spectral norm error was very
recently provided in \cite{ABTZ12}.

\section{Analysis}

Let $[m] := \{ 1, 2, \dotsc, m \}$.

\begin{theorem} \label{thm:main}
Pick any $\delta \in (0,1/3)$, and let $k := \max\{ \tr(AA^\t) / \|A\|^2, \
\tr(BB^\t) / \|B\|^2 \}$ (note that $k \leq \max\{ \rank(A), \ \rank(B)
\}$.
Assume $\Theta = \frac1{\sqrt{m}} DH$, where $D = \diag(\eps)$, $\eps \in
\{\pm1\}^m$ is a vector of independent Rademacher random variables, and $H
\in \{\pm1\}^{m \times m}$ is a Hadamard matrix.
With probability at least $1 - \delta$,
\begin{multline*}
\|\wh{AB^\t} - AB^\t\| \leq
\|A\| \|B\| \Biggl(
\sqrt{\frac{4(k + 2\sqrt{k\ln(3m/\delta)} + 2\ln(3m/\delta) +
1)\ln(6k/\delta)}{n}} \\
+ \frac{2(k + 2\sqrt{k\ln(3m/\delta)} + 2\ln(3m/\delta)
+ 1)\ln(6k/\delta)}{3n}
\Biggr)
.
\end{multline*}
\end{theorem}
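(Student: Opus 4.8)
The plan is to condition on the random rotation $\Theta$ and apply a matrix Bernstein inequality to the resulting conditionally i.i.d.\ column sampling, using the quadratic-form tail inequality to control the data-dependent quantities that enter Bernstein's bound. Conditioned on $\Theta$ (equivalently, on the Rademacher signs $\eps$), the indices $i_1,\dotsc,i_n$ are i.i.d.\ uniform on $[m]$, so I would write $\wh{AB^\t}-AB^\t = \frac1n\sum_{j=1}^n X_j$ with $X_j := m\,\tl a_{i_j}\tl b_{i_j}^\t - AB^\t$. Since $\Theta\Theta^\t = I$ forces $\frac1m\sum_{i}\tl a_i\tl b_i^\t = \tl A\tl B^\t = AB^\t$, each $X_j$ has conditional mean zero. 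As $X_j$ is not symmetric, I would pass to the Hermitian dilation $Y_j := \bigl(\begin{smallmatrix}0 & X_j \\ X_j^\t & 0\end{smallmatrix}\bigr)$, which satisfies $\|Y_j\| = \|X_j\|$ and $\bigl\|\frac1n\sum_j Y_j\bigr\| = \|\wh{AB^\t}-AB^\t\|$, so that a matrix Bernstein bound for symmetric sums applies directly.

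The key observation for controlling the $\Theta$-dependent quantities is that $\tl a_i = A\Theta e_i = \frac1{\sqrt m}A(Dh_i)$, where $h_i$ is the $i$th column of $H$; since $h_i\in\{\pm1\}^m$ and $D=\diag(\eps)$, the vector $Dh_i$ is itself a Rademacher vector, so $\|\tl a_i\|^2 = \frac1m (Dh_i)^\t (A^\t A)(Dh_i)$ is a quadratic form in a subgaussian vector with mean $\frac1m\tr(A^\t A)$. Applying the quadratic-form tail inequality with $M = A^\t A$ (using $\tr(A^\t A)=k_A\|A\|^2$, $\tr((A^\t A)^2)\le\|A\|^2\tr(A^\t A)$, and $\|A^\t A\|=\|A\|^2$, where $k_A := \tr(AA^\t)/\|A\|^2$), a union bound over $i\in[m]$ with per-event probability $\delta/(3m)$, and the analogous statement for $B$, I obtain an event of probability at least $1-2\delta/3$ on which $\max_i\|\tl a_i\|^2 \le \frac{\|A\|^2}{m}v$ and $\max_i\|\tl b_i\|^2 \le \frac{\|B\|^2}{m}v$, where $v := k + 2\sqrt{k\ln(3m/\delta)} + 2\ln(3m/\delta)$.

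On this event I would bound the two Bernstein inputs. For the uniform bound, $\|X_j\| \le m\|\tl a_{i_j}\|\,\|\tl b_{i_j}\| + \|AB^\t\| \le \|A\|\|B\|(v+1)$, where the $+1$ comes from $\|AB^\t\|\le\|A\|\|B\|$; this is exactly the factor $k+2\sqrt{k\ln(3m/\delta)}+2\ln(3m/\delta)+1$ in the theorem. For the variance, expanding gives $\E[X_jX_j^\t] = m\sum_i\|\tl b_i\|^2\,\tl a_i\tl a_i^\t - AB^\t BA^\t \preceq \bigl(m\max_i\|\tl b_i\|^2\bigr)\tl A\tl A^\t = \bigl(m\max_i\|\tl b_i\|^2\bigr)AA^\t$, so $\|\E[X_jX_j^\t]\|\le\|A\|^2\|B\|^2 v$, and symmetrically for $\E[X_j^\t X_j]$; summing over $j$ controls $\sigma^2 := \bigl\|\sum_j\E[Y_j^2]\bigr\|$. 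Finally I would apply matrix Bernstein conditionally with the remaining budget $\delta/3$ and invert the tail; the linear term $\frac{2\|A\|\|B\|(v+1)\ln(6k/\delta)}{3n}$ and (up to the constant bookkeeping) the $\sqrt{\,\cdot\,/n}$ term then emerge.

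The main obstacle is obtaining $k$, rather than the ambient dimension $d_A+d_B$, inside the logarithm: this requires the \emph{intrinsic-dimension} form of matrix Bernstein, whose effective dimension is $\tr\bigl(\sum_j\E[Y_j^2]\bigr)/\sigma^2$. I would bound $\tr\E[X_jX_j^\t] \le \bigl(m\max_i\|\tl b_i\|^2\bigr)\|\tl A\|_F^2 = \bigl(m\max_i\|\tl b_i\|^2\bigr)\|A\|_F^2$ and use $\|A\|_F^2 = k_A\|A\|^2\le k\|A\|^2$, so that the trace-to-norm ratio of the conditional second-moment matrix is $O(k)$; combined with the $\delta/3$ budget this yields $\ln(6k/\delta)$. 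The two remaining bookkeeping tasks are juggling the two independent sources of randomness (the signs $\eps$ and the sampling) through the conditioning, and verifying that the intrinsic-dimension Bernstein bound degrades gracefully enough that the stated constants hold; these are where I expect the calculation to be most delicate.
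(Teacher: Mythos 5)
Your proposal is correct and follows essentially the same route as the paper: conditioning on $\Theta$, applying the trace-dependent (intrinsic-dimension) matrix Bernstein inequality of \cite{HKZ12-matrix} to the Hermitian dilation of the sampled outer products (the paper's Lemma~\ref{lem:main}), controlling the column norms $m\max_i\|\tl a_i\|^2, m\max_i\|\tl b_i\|^2$ via the subgaussian quadratic-form tail bound applied to the Rademacher vectors $Dh_i$ with a union bound over $i \in [m]$ (the paper's Lemma~\ref{lem:coherence}), and combining the two events with budgets $2\delta/3$ and $\delta/3$. The only differences are cosmetic --- you center before dilating rather than after, and you bound $\tr(\E[X_jX_j^\t])$ by pulling out $\max_i\|\tl b_i\|^2$ rather than by Cauchy--Schwarz (giving effective dimension $2k$ rather than $2\sqrt{k_Ak_B}$, which still suffices for the $\ln(6k/\delta)$ in the statement).
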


The proof of Theorem~\ref{thm:main} is a consequence of the following
lemmas, combined with a union bound.
Lemma~\ref{lem:main} bounds the error in terms of a certain quantity $\mu$
which depends on the random orthogonal matrix $\Theta$ (and $A$ and $B$).
Lemma~\ref{lem:coherence} gives a bound on $\mu$ that holds with high
probability over the random choice of $\Theta$.

\begin{lemma} \label{lem:main}
Define
$Q = [q_1|q_2|\dotsb|q_m] := \|A\|^{-1} A \Theta$,
$R = [r_1|r_2|\dotsb|r_m] := \|B\|^{-1} B \Theta$,
$k_A := \tr(QQ^\t) = \tr(AA^\t) / \|A\|^2$,
$k_B := \tr(RR^\t) = \tr(BB^\t) / \|B\|^2$, and
\begin{align*}
\mu & := m \max \Bigl(
\bigl\{ \|q_i\|^2 : i \in [m] \bigr\}
\ \cup \ \bigr\{ \|r_i\|^2 : i \in [m] \bigr\} \Bigr)
.
\end{align*}
Then
\[
\Pr\biggl[ \|\wh{AB^\t} - AB^\t\| >
\|A\| \|B\| \biggl(
\sqrt{\frac{2(\mu + 1)t}{n}} + \frac{(\mu + 1)t}{3n}
\biggr) \biggr] \leq 2\sqrt{k_A k_B}
\cdot \frac{t}{e^t - t - 1}
.
\]
\end{lemma}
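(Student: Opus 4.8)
The plan is to condition on $\Theta$, so that $Q$, $R$, $\mu$, $k_A$, $k_B$ are all fixed, and apply an intrinsic-dimension version of the matrix Bernstein inequality to the sampling randomness. Since $\Theta\Theta^\t = I$ we have $AB^\t = \|A\|\|B\|\,QR^\t$ and $\tl{a}_i = \|A\|q_i$, $\tl{b}_i = \|B\|r_i$, so the normalized error is a sum of independent matrices,
\[ G := \frac{\wh{AB^\t}-AB^\t}{\|A\|\|B\|} = \sum_{j=1}^n X_j, \qquad X_j := \frac{m}{n}\,q_{i_j}r_{i_j}^\t - \frac1n QR^\t . \]
Because $i_1,\dotsc,i_n$ are i.i.d.\ uniform on $[m]$ and $\E[q_{i_j}r_{i_j}^\t] = \frac1m QR^\t$, the $X_j$ are i.i.d.\ and mean-zero. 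As $\|G\|$ is the largest eigenvalue of the Hermitian dilation $\bigl[\begin{smallmatrix}0&G\\G^\t&0\end{smallmatrix}\bigr] = \sum_j\bigl[\begin{smallmatrix}0&X_j\\X_j^\t&0\end{smallmatrix}\bigr]$, bounding $\|G\|$ is a matrix-Bernstein problem for a sum of bounded mean-zero Hermitian matrices, and I would invoke the variant whose tail is $\tfrac{\tr V}{\sigma^2}\cdot\tfrac{t}{e^t-t-1}$ at the threshold $\sqrt{2\sigma^2 t}+\tfrac13 Lt$, where $L$ bounds each summand in norm, $V = \sum_j\E[(\text{dilation of }X_j)^2]$, and $\sigma^2 \ge \|V\|$ is a variance proxy.

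Next I would compute the two ingredients. Since $\|q_i\|,\|r_i\|\le\sqrt{\mu/m}$ we get $\|q_{i_j}r_{i_j}^\t\|\le\mu/m$, while $\|QR^\t\|\le\|Q\|\|R\| = 1$ (as $\|Q\|=\|R\|=1$), so each dilated summand has norm at most $(\mu+1)/n =: L$. The dilation squares to $\diag(X_jX_j^\t,\,X_j^\t X_j)$, so $V = \diag(V_A,V_B)$ with $V_A = \sum_j\E[X_jX_j^\t]$ and $V_B = \sum_j\E[X_j^\t X_j]$. A direct second-moment computation, in which the mean-subtraction only helps, gives $V_A\preceq\tfrac\mu n QQ^\t$ and $V_B\preceq\tfrac\mu n RR^\t$, whence $\|V\|\le\tfrac\mu n\le\tfrac{\mu+1}n =: \sigma^2$. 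These choices of $L$ and $\sigma^2$ are exactly what produce the $(\mu+1)$ factors and the $\sqrt{2(\mu+1)t/n}+\tfrac{(\mu+1)t}{3n}$ threshold in the claim.

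The one genuinely clever step — where I expect the real content to lie — is the dimension prefactor. Bounding the intrinsic dimension by $\tr(\bar V)/\|\bar V\|$ for the crude upper bound $\bar V = \tfrac\mu n\diag(QQ^\t,RR^\t)$ would only give $k_A+k_B$. Instead I would use the exact identity $\tr V_A = \tr\bigl(\sum_j\E[X_jX_j^\t]\bigr) = \tr\bigl(\sum_j\E[X_j^\t X_j]\bigr) = \tr V_B =: \tau$, valid because $\tr(MM^\t) = \tr(M^\t M)$. The same second-moment bound gives both $\tau\le\tfrac\mu n k_A$ and $\tau\le\tfrac\mu n k_B$, so taking the geometric mean of these two bounds yields $\tau\le\tfrac\mu n\sqrt{k_Ak_B}$ and hence $\tr V = 2\tau\le\tfrac{2\mu}{n}\sqrt{k_Ak_B}$. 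Dividing by $\sigma^2 = \tfrac{\mu+1}n$ bounds the intrinsic dimension by $2\sqrt{k_Ak_B}$, matching the prefactor in the statement; substituting $L$, $\sigma^2$, and this prefactor into the Bernstein tail finishes the proof. The main obstacles are therefore (i) stating the matrix Bernstein variant whose prefactor is the trace-to-norm ratio of the \emph{actual} variance rather than of a crude positive-semidefinite upper bound, so that the equal-trace identity can be exploited, and (ii) carrying out the second-moment bounds on $V_A$ and $V_B$ cleanly.
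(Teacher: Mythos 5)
Your proposal is correct and takes essentially the same route as the paper: the paper also conditions on $\Theta$, works with the symmetric dilation (its random matrix $X - M$ is, up to the $1/n$ scaling, exactly your dilated summand), and applies the same intrinsic-dimension matrix Bernstein inequality with $b = \sigma^2 = \mu + 1$ and trace bound $2\mu\sqrt{k_A k_B}$, yielding the prefactor $2\mu\sqrt{k_A k_B}/(\mu+1) \le 2\sqrt{k_A k_B}$. The only point of divergence is how the trace of the variance is bounded: the paper computes $\tr(\E[X^2]) = 2m\sum_{i=1}^m \|q_i\|^2\|r_i\|^2 \le 2\mu\sum_{i=1}^m \|q_i\|\|r_i\|$ and finishes with Cauchy--Schwarz, whereas you use the equal-trace identity $\tr V_A = \tr V_B$ together with the two psd bounds $\tfrac{\mu}{n}k_A$ and $\tfrac{\mu}{n}k_B$ --- an equally valid (indeed marginally sharper, since it gives $\min\{k_A,k_B\}$) one-line variation within the same argument.
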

\begin{proof}
Observe that because $\Theta$ is orthogonal,
\begin{align*}
\|\wh{AB^\t} - AB^\t\|
= \|A\| \|B\| \biggl\| \frac{m}{n} \sum_{j=1}^n q_{i_j} r_{i_j}^\t -
Q R^\t \biggr\|
.
\end{align*}
We now derive a high probability bound for the last term on the right-hand
side.
Define a random symmetric matrix $X$ with
\[ \Pr\biggl[ X = m \begin{bmatrix} 0 & q_i r_i^\t \\ r_i q_i^\t
& 0 \end{bmatrix} \biggr] = \frac1m , \quad i \in [m] , \]
and let $X_1, X_2, \dotsc, X_n$ be independent copies of $X$.
Define
\[ \wh{M} := \frac1n \sum_{j=1}^n X_j
\qquad \text{and} \qquad
M := \begin{bmatrix} 0 & Q R^\t \\ RQ^\t & 0
\end{bmatrix} . \]
Then
\[ \|\wh{M} - M\|
= \biggl\| \frac1n \sum_{j=1}^n (X_j - M) \biggr\|
\overset{\text{distribution}}{=}
\biggl\| \frac{m}{n} \sum_{j=1}^n q_{i_j} r_{i_j}^\t - Q R^\t
\biggr\| . \]
Observe that $\E[ X - M ] = 0$ and $\|X - M\| \leq \|X\| + \|M\| \leq \mu +
1$.
Moreover,
\begin{align*}
\E[X]^2
& = M^2
= \begin{bmatrix} Q R^\t R Q^\t & 0 \\ 0 & R Q^\t Q R^\t
\end{bmatrix}
, \\
\E[X^2]
& = \sum_{i=1}^m m \begin{bmatrix} \|r_i\|^2 q_iq_i^\t & 0 \\ 0 &
\|q_i\|^2 r_ir_i^\t \end{bmatrix}
= m \begin{bmatrix} \sum_{i=1}^m \|r_i\|^2 q_iq_i^\t & 0 \\ 0 &
\sum_{i=1}^m \|q_i\|^2 r_ir_i^\t \end{bmatrix}
,
\\
\tr(\E[X^2])
& = 2m \sum_{i=1}^m \|q_i\|^2 \|r_i\|^2
\leq 2\mu \sum_{i=1}^m \|q_i\| \|r_i\|
\leq 2\mu \sqrt{k_A k_B}
,
\\
\|\E[X^2]\|
&
\leq m \max\Bigl\{ \Bigl\| \sum_{i=1}^m \|r_i\|^2 q_iq_i^\t \Bigr\|, \
\Bigl\| \sum_{i=1}^m \|q_i\|^2 r_ir_i^\t \Bigr\| \Bigr\}
\leq \mu \max\Bigl\{ \| QQ^\t \|, \ \| RR^\t \| \Bigr\}
= \mu
,
\\
\|\E[(X - M)^2]\|
& = \|\E[X^2] - M^2\| \leq \mu + 1
.
\end{align*}
Therefore, by the matrix Bernstein inequality from~\cite{HKZ12-matrix},
\[
\Pr\biggl[ \|\wh{M} - M\| >
\sqrt{\frac{2(\mu + 1)t}{n}} + \frac{(\mu + 1)t}{3n}
\biggr] \leq 2\sqrt{k_A k_B} \cdot \frac{t}{e^t - t - 1} .
\]
The lemma follows.
\end{proof}

The following lemma is a special case of a result found
in~\cite{HKZ11-regression}.
\begin{lemma} \label{lem:coherence}
Assume $\Theta = \frac1{\sqrt{m}} DH$, where $D = \diag(\eps)$, $\eps \in
\{\pm1\}^m$ is a vector of independent Rademacher random variables, and $H
\in \{\pm1\}^{m \times m}$ is a Hadamard matrix.
Let $Z \in \R^{m \times d}$ be a matrix with $\|Z\| \leq 1$, and set $k_Z
:= \tr(ZZ^\t)$.
Then
\[
\Pr\biggl[ \max\{ \|Z^\t \Theta e_i\|^2 : i \in [m] \}
> \frac{k_Z + 2\sqrt{k_Z(\ln(m) + t)} + 2(\ln(m) + t)}{m} \biggr]
\leq e^{-t}
\]
where $e_i \in \{0,1\}^m$ is the $i$-th coordinate axis vector in $\R^m$.
\end{lemma}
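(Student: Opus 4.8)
The plan is to reduce the quantity $\|Z^\t \Theta e_i\|^2$, for each fixed $i$, to a quadratic form in the Rademacher vector $\eps$, then invoke the subgaussian quadratic-form tail inequality of~\cite{HKZ11-regression}, and finally union bound over $i \in [m]$.

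First I would rewrite the target. Writing $h_i := H e_i \in \{\pm1\}^m$ for the $i$-th column of the Hadamard matrix, we have $\Theta e_i = \frac1{\sqrt m} D h_i = \frac1{\sqrt m}\diag(h_i)\eps$, since $D h_i = \diag(\eps) h_i = \diag(h_i)\eps$. Hence $Z^\t \Theta e_i = W_i \eps$ with $W_i := \frac1{\sqrt m} Z^\t \diag(h_i)$, and
\[
\|Z^\t \Theta e_i\|^2 = \eps^\t \Sigma_i \eps, \qquad \Sigma_i := W_i^\t W_i = \tfrac1m \diag(h_i)\, Z Z^\t \diag(h_i) .
\]
This exhibits the target as a quadratic form in a subgaussian vector: the Rademacher vector satisfies $\E\exp(\alpha^\t \eps) \le \exp(\|\alpha\|^2/2)$, so it is subgaussian with parameter $1$.

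Next I would compute the three spectral quantities that drive the tail bound. The key observation is that $\diag(h_i)$ is orthogonal with $\diag(h_i)^2 = I$, so conjugation by it preserves both the trace and the spectrum of $ZZ^\t$. Thus $\tr(\Sigma_i) = \frac1m \tr(ZZ^\t) = k_Z/m$; also $\|\Sigma_i\| = \frac1m\|ZZ^\t\| = \frac1m\|Z\|^2 \le 1/m$; and $\tr(\Sigma_i^2) = \frac1{m^2}\tr((ZZ^\t)^2) = \frac1{m^2}\sum_j \lambda_j^2 \le \frac1{m^2}\sum_j \lambda_j = k_Z/m^2$, where $\lambda_j \in [0,1]$ are the eigenvalues of $ZZ^\t$ and the inequality uses $\lambda_j \le 1$ (from $\|Z\| \le 1$). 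Applying the quadratic-form inequality with parameter $s$ then gives, for each fixed $i$,
\[
\Pr\Bigl[\|Z^\t \Theta e_i\|^2 > \tr(\Sigma_i) + 2\sqrt{\tr(\Sigma_i^2)\,s} + 2\|\Sigma_i\|\,s\Bigr] \le e^{-s} ,
\]
and substituting the bounds above makes the right-hand threshold at most $(k_Z + 2\sqrt{k_Z s} + 2s)/m$.

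Finally I would set $s = \ln(m) + t$, so that $e^{-s} = e^{-t}/m$, and take a union bound over the $m$ indices $i \in [m]$; the total failure probability is at most $m \cdot e^{-t}/m = e^{-t}$, while the threshold becomes exactly the expression in the statement. The main obstacle is the estimate $\tr(\Sigma_i^2) \le k_Z/m^2$: obtaining the $\sqrt{k_Z}$ factor (rather than a cruder $\sqrt{m}$-type factor) in the middle term hinges on the observation that $\|Z\| \le 1$ forces each eigenvalue of $ZZ^\t$ into $[0,1]$, so that $\sum_j \lambda_j^2 \le \sum_j \lambda_j = k_Z$. Everything else is routine once the quadratic-form structure and the orthogonal-conjugation invariance are in place.
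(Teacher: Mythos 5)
Your proposal is correct and follows essentially the same route as the paper: reduce $\|Z^\t \Theta e_i\|^2$ to a quadratic form in the Rademacher vector $\eps$, apply the subgaussian quadratic-form tail bound with the estimates $\|ZZ^\t\| \leq 1$ and $\tr((ZZ^\t)^2) \leq k_Z$, substitute $\ln(m) + t$ for the tail parameter, and union bound over $i \in [m]$. The only cosmetic difference is that the paper notes $\sqrt{m}\,\Theta e_i$ has the same distribution as $\eps$ and works directly with the matrix $ZZ^\t$, whereas you absorb $\diag(h_i)$ into the quadratic form and observe that this conjugation leaves the trace, spectral norm, and $\tr(\Sig_i^2)$ unchanged --- the same fact in a different packaging.
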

\begin{proof}
Observe that for each $i \in [m]$, the random vector $\sqrt{m}
\Theta e_i$ has the same distribution as $\eps$.
Moreover, $\eps$ is a subgaussian random vector in the sense that $\E[
\exp(\alpha^\t \eps) ] \leq \exp(\|\alpha\|^2 / 2)$ for any vector $\alpha
\in \R^m$.
Therefore, we may apply a tail bound for quadratic forms in subgaussian
random vectors (\emph{e.g.},~\cite{HKZ12-quadratic}) to obtain
\[
\Pr\biggl[ \|\sqrt{m} Z^\t \Theta e_i\|^2
> \tr(ZZ^\t) + 2\sqrt{\tr((ZZ^\t)^2)\tau} + 2\|ZZ^\t\|\tau
\biggr] \leq e^{-\tau}
\]
for each $i \in [m]$ and any $\tau > 0$.
The lemma follows by observing that $\|ZZ^\t\| \leq 1$ and $\tr((ZZ^\t)^2)
\leq \tr(ZZ^\t) \|ZZ^\t\| \leq k_Z$, and applying a union bound over all $i
\in [m]$.
\end{proof}

We note that Lemma~\ref{lem:coherence} holds for many other distributions
of orthogonal matrices (with possibly worse constants).
All that is required is that $\sqrt{m} \Theta e_i$ be a subgaussian random
vector for each $i \in [m]$.
See~\cite{HKZ11-regression} for more discussion.

\begin{proof}[Proof of Theorem~\ref{thm:main}]
We apply Lemma~\ref{lem:coherence} with both $Z = A / \|A\|$ and $Z = B /
\|B\|$, and combine the implied probability bounds with a union bound to
obtain
\[ \Pr\bigl[ \mu > k + 2\sqrt{k\log(3m/\delta)} + 2\ln(3m/\delta) \bigr]
\leq 2\delta/3 , \]
where $\mu$ is defined in the statement of Lemma~\ref{lem:main}, and the
probabiltiy is taken with respect to the random choice of $\Theta$.
Now we apply Lemma~\ref{lem:main}, together with the bound $t/(e^t-t-1)
\leq e^{-t/2}$ for $t \geq 2.6$, and substitute $t := 2\ln(6k/\delta)$ to
obtain
\[
\Pr\biggl[ \|\wh{AB^\t} - AB^\t\| >
\|A\| \|B\| \biggl(
\sqrt{\frac{4(\mu + 1)\ln(6k/\delta)}{n}}
+ \frac{2(\mu + 1)\ln(6k/\delta)}{3n}
\biggr) \biggr] \leq \delta/3
.
\]
Combining the two probability bounds with a union bound implies the claim.
\end{proof}

\subsection*{Acknowledgements}

We thank Joel Tropp for pointing out the error in the analysis
from~\cite{HKZ12-matrix}.

\subsection*{References}
{\def\section*#1{} \bibliography{mult} \bibliographystyle{alpha}}

\end{document}